%\newif \ifcomments \commentstrue
\newif \ifcomments \commentsfalse
\newif \iffull \fulltrue

\fulltrue
\documentclass[11pt]{article}
\usepackage[utf8]{inputenc}
\usepackage[letterpaper,margin=1in,bmargin=1.5in]{geometry}
\usepackage[dvipsnames]{xcolor}

\usepackage{authblk}
\usepackage{setspace}
\usepackage{tikz}
\usetikzlibrary{calc,math,arrows,arrows.meta,shapes,positioning,matrix,decorations.pathreplacing}
\usepackage{amsfonts,amsmath,amsthm}
\usepackage{enumitem}
\usepackage{cryptocode} %lambda, advantage, operators, sets, adversary, landau, probability, notions, logic, ff, mm, primitives, events, complexity, asymptotics, keys
\usepackage{multirow, ctable, tabu, makecell, graphics}
\usepackage{algorithm}
\usepackage[noend]{algpseudocode}
\usepackage[style=numeric-comp,backend=bibtex]{biblatex}
\usepackage[pdfauthor={},pdfpagelabels=true,linktocpage=true,hidelinks,linktoc=all,colorlinks]{hyperref}
\usepackage[nameinlink,capitalize]{cleveref}
\hypersetup{linkcolor = {black}, citecolor = {magenta}, urlcolor = {black}}
\usepackage{amsmath}
\usepackage{amsfonts,amssymb,amsthm}
\usepackage{pifont}
\usepackage{xspace}
\usepackage{bm}
\usepackage{enumitem}
\usepackage{hyperref}
\usepackage{biblatex}

%--------------------------------------------------------------%
%---------------------------Notation---------------------------%
%--------------------------------------------------------------%

% Text aliases
\newcommand{\sysname}{Zef\xspace}

% System specific notation
%\newcommand{\obj}{O}
%\newcommand{\objspace}{\mathcal{O}}
%\newcommand{\nodes}{\mathcal{N}}

% Standard math notation
     % Natural numbers
     % Integers
     % Real numbers
     % Finite field

\makeatletter
\newcommand{\shorteq}{%
  \settowidth{\@tempdima}{e}% Width of hyphen
  \resizebox{\@tempdima}{\height}{=}%
}
\makeatother

% Standard cryptography notation
%\newcommand{\advA}{\mathcal{A}}
%\newcommand{\Adv}{\mathbf{Adv}}
%\ifdefined\secparam\renewcommand{\secparam}{\lambda}\else\newcommand{\secparam}{\lambda}\fi
%\newcommand{\bits}{\{0,1\}}
%\newcommand{\negl}{\textnormal{\textsf{negl}}}
%\newcommand{\poly}{\textnormal{\textsf{poly}}}
%\newcommand{\getsr}{{\:{\leftarrow{\hspace*{-3pt}\raisebox{.75pt}{$\scriptscriptstyle\$$}}}\:}}
%\newcommand{\getsR}{\:\overset{\$}{\leftarrow}\:}

\newcommand{\keyword}[1]{{\normalfont \textsf{#1}}}

\newcommand{\pk}{\keyword{pk}}

\newcommand{\id}{\keyword{id}}
\newcommand{\swid}{\keyword{swid}}

\newcommand{\cert}{\keyword{cert}}
\newcommand{\auth}{\keyword{auth}}
\newcommand{\decision}{\keyword{decision}}
\newcommand{\round}{\keyword{round}}
\newcommand{\proposed}{\keyword{proposed}}
\newcommand{\locked}{\keyword{locked}}

\newcommand{\ChangeKey}{\keyword{ChangeKey}}
\newcommand{\OpenAccount}{\keyword{OpenAccount}}

\newcommand{\Transfer}{\keyword{Transfer}}
\newcommand{\Abort}{\keyword{Abort}}
\newcommand{\Confirm}{\keyword{Confirm}}
\newcommand{\HandleProposal}{\keyword{HandleProposal}}
\newcommand{\HandlePreCommit}{\keyword{HandlePreCommit}}
\newcommand{\HandleCommit}{\keyword{HandleCommit}}
\newcommand{\Proposal}{\keyword{Proposal}}
\newcommand{\Commit}{\keyword{Commit}}
\newcommand{\PreCommit}{\keyword{PreCommit}}
\newcommand{\Lock}{\keyword{Lock}}
\newcommand{\Execute}{\keyword{Execute}}
\newcommand{\StartConsensusInstance}{\keyword{StartConsensusInstance}}
\newcommand{\LockInto}{\keyword{LockInto}}

\newcommand{\initstate}{\keyword{init}}
\newcommand{\accounts}{\keyword{accounts}}
\newcommand{\atomicswaps}{\keyword{atomic\_swaps}}
\newcommand{\val}{\keyword{value}}

\newcommand{\nextsequence}{\keyword{next\_sequence}}
\newcommand{\confirmed}{\keyword{confirmed}}
\newcommand{\received}{\keyword{received}}
\newcommand{\pending}{\keyword{pending}}

\newcommand{\balance}{\keyword{balance}}

\newcommand{\Setup}{\keyword{Setup}}
\newcommand{\Encrypt}{\keyword{Encrypt}}
\newcommand{\ShareDecrypt}{\keyword{ShareDecrypt}}
\newcommand{\ShareVerify}{\keyword{ShareVerify}}
\newcommand{\Combine}{\keyword{Combine}}

\newcommand{\PK}{\keyword{PK}}
\newcommand{\VK}{\keyword{VK}}
\newcommand{\SK}{\keyword{SK}}

%--Cryptographic primitives--%
%\newcommand{\kg}{\keyword{kg}}
%\newcommand{\sign}{\keyword{sign}}
%\newcommand{\ver}{\keyword{ver}}

%\newcommand{\tkg}{\keyword{tkg}}
%\newcommand{\tsign}{\keyword{tsign}}
%\newcommand{\agg}{\keyword{agg}}

%\newcommand{\blind}{\keyword{blind}}
%\newcommand{\blsign}{\keyword{blsign}}
%\newcommand{\unblind}{\keyword{unblind}}

% Standard CS
%\newcommand{\ifcode}{\textbf{if }}
%\newcommand{\thencode}{\textbf{then }}
%\newcommand{\elsecode}{\textbf{else }}
%\newcommand{\forcode}{\textbf{for }}
%\newcommand{\foreachcode}{\textbf{foreach }}
%\newcommand{\andcode}{\textbf{and }}
%\newcommand{\whilecode}{\textbf{while }}
%\newcommand{\orcode}{\textbf{or }}

%\newcommand{\bool}{\keyword{bool}}

%--------------------------------------------------------------%

%--------------------------------------------------------------%
%-----------------------LaTeX Formatting-----------------------%
%--------------------------------------------------------------%
\ifcomments
    \newcommand{\alberto}[1]{\textsf{\color{violet}{[Alberto: {#1}]}}}
    \newcommand{\mathieu}[1]{\textsf{\color{olive}{[Mathieu: {#1}]}}}
    \newcommand{\george}[1]{\textsf{\color{orange}{[George: {#1}]}}}
    \newcommand{\michal}[1]{\textsf{\color{teal}{[Michał: {#1}]}}}
\else
    \newcommand{\alberto}[1]{}
    \newcommand{\mathieu}[1]{}
    \newcommand{\george}[1]{}
    \newcommand{\michal}[1]{}

\fi

% Environments
\newtheorem{theorem}{Theorem}[section]

\newtheorem{lemma}[theorem]{Lemma}

\theoremstyle{definition}

\theoremstyle{remark}

\newcommand*\circled[1]{\tikz[baseline=(char.base)]{
            \node[scale=0.8,shape=circle,fill=black!75,inner sep=2pt] (char) {\textcolor{white}{\textbf{#1}}};}}

\newcommand*\bluecircled[1]{\tikz[baseline=(char.base)]{
            \node[scale=0.8,shape=circle,fill=BlueViolet!75,inner sep=2pt] (char) {\textcolor{white}{\textbf{#1}}};}}

\addbibresource{references.bib}

%----------------------------------------%

\begin{document}
\title{Low-latency, Scalable, DeFi with Zef}
\author{
Mathieu Baudet$^1$, Alberto Sonnino$^1$, Micha\l{} Kr\'{o}l$^2$
}
\date{
    $^1$Novi, Facebook\\%
    $^2$City, Univeristy of London\\[2ex]%
}

\maketitle

% \begin{abstract}
% Abstract goes here.
% \end{abstract}

%!TEX root = ../main.tex

\section{Introduction}

\sysname~\cite{zef-report} was recently proposed to extend the low-latency, Byzantine-Fault Tolerant (BFT) payment protocol FastPay~\cite{fastpay} with anonymous coins.
This report explores further extensions of FastPay and \sysname beyond payments.
We start by off-chain assets (e.g. NFTs) in Section~\ref{sec:assets}. We introduce the idea of on-demand BFT consensus instances throught the example of atomic swaps between account owners in Section~\ref{sec:atomic_swap}.

%We propose to generalize account operations with ideas inspired by CRDTs~\cite{study-crdts} in Appendix~\ref{sec:accounts}. 

%!TEX root = ../main.tex

\section{Off-chain Assets}
\label{sec:assets}

In Zef, a coin is defined as a quorum of validator signatures (known as a \emph{certificate}) that binds a user account identifier $\id$ with a monetary value $v$. For instance, the \emph{transparent coins} of Zef are defined as $C = \cert[(\id, s, v)]$ for some seed $s$. To provide unlinkability and privacy, Zef also defines another type of coin called \emph{opaque coins} along the same lines but using the Coconut scheme~\cite{coconut}. For privacy reasons, all Zef coins (transparent and opaque) are stored off-chain\footnote{We use the expressions ``off-chain" and ``on-chain" for the data outside and inside the Zef authorities, although Zef is not blockchain, strictly speaking.} by their owners.

The coins linked to an account $\id$ can be spent altogether by the owner of $\id$ by issuing an operation $\mathsf{Spend}$ that deactivates $\id$ permanently.
Compared to FastPay, Zef accounts are addressed by a unique identifier $\id$ that can never be replayed in future accounts. This makes it possible to effectively remove deactivated accounts and avoid a permanent storage cost every time that coins are spent.
Zef coins can be generalized into \emph{assets}, where a certificate binds arbitrary data to an account~$\id$:

\begin{itemize}
\item Assuming that a distinct data value $x$ needs to be linked to an account $\id$, the asset may simply defined as $A = \cert[(\id, x)]$.

\item The procedure to consume and create coins in Zef can be generalized to consume input assets and create new assets according to specific rules of the form
\[ (x^{out}_1, \ldots, x^{out}_d) = f_\mathsf{exec}(P, x^{in}_1, \ldots, x^{in}_\ell) \] where $f_\mathsf{exec}$ is a fixed, deterministic, partial \emph{execution function} and $P$ is a set of parameters.

\item Importantly, the $\mathsf{Spend}$ operations used to deactivate the inputs accounts $x^{in}_1, \ldots, x^{in}_\ell$ must contain a commitment on $P$ so that any replay of the asset creation request on the same input assets produces exactly the same output assets. (See the coin creation request in Zef~\cite{zef-report})

\end{itemize}

Off-chain assets based on Zef provide storage-free certified execution at scale for single-owner data. %(See also Brick~\cite{brick})
This new general framework applies in particular to Non-Fungible Tokens (NFTs) with the following benefits:
\begin{itemize}
    \item Off-chain storage provides some level of privacy w.r.t other users.
    \item Any owner-initiated operations such as transferring, combining NFTs, and applying legitimate modifications are supported at scale.
\end{itemize}

%!TEX root = ../main.tex

\section{Atomic Swaps}
\label{sec:atomic_swap}

%\mathieu{Prototype in progress here: \url{https://github.com/novifinancial/fastpay/pull/24}}

%\george{This is a full single shot consensus protocol, burried in the end of the paper; and then only used for a very restricted setting, namely swaps. I think that this is potentially very high value: we should probably (1) separate the 1-shot decision from the details of swaps (2) describe the consensus as a self-contained protocol, with validity conditions on inputs (3) discuss how to use this in conjunction with fastpay for transactions across assets (4) use the swap as a specific example of this.}
%\mathieu{Note that the BFT protocol does not have a proper ``pacemaker". Specifically, it needs either (a) to be completed with the infamous round synchronization protocol with exponential delays, or (b) some off-protocol coordination between the stake holders. Personally, I prefer (b) = assuming coordination between the clients, but that's only fine in applications where every participant is arguably incentivized to make progress, such as atomic swaps (and perhaps other applications, I suppose, but I haven't them right now). Anyways, with these constraints in mind, I am ok with generalizing the result if we think this can improve the paper.}

We now describe an extension of Zef~\cite{zef-report} meant to support the swap of ownership of two accounts in an atomic way.

To prevent race-conditions between operations (say, spending and swapping assets), a correct solution must start by requiring each owner to independently lock their asset into a new instance of the swap protocol. This creates a difficulty as one owner may lock their asset while the other fails to do so, or simply changes their mind. Hence, contrary to the operations described in \sysname, an acceptable solution for atomic swap must support two possible outcomes: confirm and abort.

Because authorities must agree on this binary outcome, this raises the interesting question whether a correct solution for atomic swap in the FastPay model must implement a fully-featured, one-shot binary consensus protocol. From the FLP theorem~\cite{flp}, we know that a deterministic, asynchronous solution for a fault-tolerant consensus cannot guarantee both safety and eventual termination.
%\mathieu{I suppose there is also the theoretical argument that atomic memory-swap has consensus number $\infty$ \url{https://www.cs.yale.edu/homes/aspnes/pinewiki/WaitFreeHierarchy.html}}

In this section we start by describing an implementation of atomic swap where termination assumes eventual cooperation between the owners of the two accounts. Importantly, this assumption is only made after both owners have locked their assets. We will discuss an optional refinement of the protocol at the end in order to enforce eventual termination in the partially-synchronous model.

\paragraph{Atomic-swap instances.} We augment the state of each authority $\alpha$ (Zef~\cite{zef-report}, Section 3 and 4) with a new field $\atomicswaps(\alpha)$ that maps certain UIDs to the states of ongoing atomic-swap instances. Although atomic-swap instances are addressed by a UID, they have a distinct type (see below) and are not subject to account operations of Section~4 of Zef~\cite{zef-report}. For clarity, below, we use $\swid$ (rather than $\id$) to denote identifiers in the domain of $\atomicswaps(\alpha)$.

\paragraph{New account operations.}
To create atomic-swap instances and lock assets into an ongoing instances, we extend the protocol of Section~4 of Zef~\cite{zef-report} with two new account operations:
\begin{itemize}
    \item To lock the content of an account $\id$ into an atomic-swap instance $\swid$, we introduce operations of the form $O = \LockInto(\swid, i, \pk)$ where $i \in \{ 1, 2 \}$ is the role index in the atomic swap, and $\pk$ is a key provided for authentication purposes and to be set on the other account in case of success. Such operation is sent in a (locking) request $R = \Lock(\id, n, O)$.
    
    \item To create a new instance of an atomic swap identified by a fresh identifier $\swid = \id :: n$, a (regular) request $R = \Execute(\id, n, O)$ may be used with \[O = \StartConsensusInstance(\swid, \id_1, n_1, \id_2, n_2)\] meaning that the owners of $\id_1$ and $\id_2$ ($\id_1 \neq \id_2$) wish to atomically exchange their ownership of $\id_1$ and $\id_2$. The numbers $n_1$ and $n_2$ are the expected sequence numbers of the lock certificates of the respective owners for the operation $\LockInto$ above.
\end{itemize}

This new operations are summarized in Algorithm~\ref{alg:account_operations}. We recall the framework for account requests in Zef in Algorithm~\ref{alg:account_service}.

\begin{algorithm}[t]
\caption{Account operations (core Zef + atomic swap)}\label{alg:account_operations}
\small
\begin{algorithmic}[1]
\algdef{SE}[ASYNC]{Async}{EndAsync}{\textbf{do asynchronously}}{\algorithmicend}%
\algtext*{EndAsync}%
\algnewcommand\algorithmicswitch{\textbf{switch}}
\algnewcommand\algorithmiccase{\textbf{case}}
\algnewcommand\algorithmicassert{\texttt{assert}}
\algnewcommand\Assert[1]{\State \algorithmicassert(#1)}%
\algdef{SE}[SWITCH]{Switch}{EndSwitch}[1]{\algorithmicswitch\ #1\ \algorithmicdo}{\algorithmicend\ \algorithmicswitch}%
\algdef{SE}[CASE]{Case}{EndCase}[1]{\algorithmiccase\ #1:}{\algorithmicend\ \algorithmiccase}%
\algtext*{EndSwitch}%
\algtext*{EndCase}%

\Function{ValidateOperation}{$\id$, $n$, $O$} \Comment{Internal validation of account operation}
\Switch{$O$}
    \Case{$\OpenAccount(\id', \pk')$}
        \State ensure $\id' = \id :: \nextsequence^\id$
    \EndCase
    \Case {$\Transfer(\id', \val)$}
        \State ensure $0 < \val \leq \balance^\id$
    \EndCase
    \Case {$\ChangeKey(\pk')$}
        \State pass
    \EndCase
    \Case {$\StartConsensusInstance(\swid, \id_1, \pk_1, \id_2, \pk_2)$}
        \State ensure $\swid = \id :: \nextsequence^\id$
        \State ensure $\id_1 \neq \id_2$
    \EndCase
    \Case {$\LockInto(\swid, i, pk)$} \Comment{Temporarily transfer the management of $\id$ to $\swid$}
        \State \textbf{return} $\Lock(\id, n, O)$ \Comment{$O$ is valid and locking.}
    \EndCase
\EndSwitch
\State \textbf{return} $\Execute(\id, n, O)$ \Comment{If we reach this, $O$ is valid and regular.}
\EndFunction

\Statex

\Function{ExecuteOperation}{$\id$, $O$, $C$}\Comment{Execution of account operation (unchanged from Zef)}
\Switch{$O$}
    \Case{$\OpenAccount(\id', \pk')$}
        \color{BlueViolet}
        \Async \Comment{Cross-shard request to $\id'$}
            \State run \Call{InitAccount}{\id', \pk'} \Comment{Create new account}
            \State $\received^{\id'} \gets \received^{\id'} :: C$ \Comment{Log certified request in recipient's account}
        \EndAsync
        \color{black}
    \EndCase
    \Case {$\Transfer(\id', \val)$}
        \State $\balance^\id \gets \balance^\id - \val$ \Comment{Update sender's balance}
        \color{BlueViolet}
        \Async \Comment{Cross-shard request to $\id'$}
            \If {$\id' \not\in \accounts$}
                \State run \Call{InitAccount}{$\id'$, $\bot$} \Comment{Create receiver's account if needed}
            \EndIf
            \State $\balance^{\id'} \gets \balance^{\id'} + \val$ \Comment{Update receiver's balance}
            \State $\received^{\id'} \gets \received^{\id'} :: C$
        \EndAsync
        \color{black}
    \EndCase
    \Case{$\ChangeKey(\pk')$}
        \State $\pk^\id \gets \pk'$ \Comment{Update authentication key}
    \EndCase
    \Case {$\StartConsensusInstance(\swid, \id_1, n_1, \id_2, n_2)$}
        \color{BlueViolet}
        \Async \Comment{Cross-shard request to $\swid$}
            \State run \Call{InitInstance}{$\swid, \id_1, n_1, \id_2, n_2, C$} \Comment{Create a new consensus instance}
        \EndAsync
        \color{black}
    \EndCase
\EndSwitch
\EndFunction
\end{algorithmic}
\end{algorithm}

\paragraph{Overview of the protocol.} The successive steps of an atomic swap of ownership between two accounts $\id_1$ and $\id_2$ can now be summarized as follows (see also Figure~\ref{fig:atomic_swap}):
\begin{itemize}
    \item The two owners of $\id_1$ and $\id_2$ coordinate off-chain and decide to swap the ownership between $\id_1$ and $\id_2$. After sharing the next sequence numbers $n_1$ and $n_2$ of their respective accounts ({\small\circled{1}}), they decide to ask a \emph{broker} to create a new UID for an atomic swap instance ({\small\circled{2}}). (This role may also be assumed by one of the owners.)
    
    \item The broker broadcasts an authenticated request containing an operation \[\StartConsensusInstance(\swid, \id_1, n_1, \id_2, n_2)\] for a suitable fresh UID $\swid$ ({\small\circled{3}}). After receiving a quorum of answers ({\small\circled{4}}), this results in a certificate $\Gamma$ certifying the creation of the instance $\swid$ to each client ({\small\circled{5}}).

    \item After verifying $\Gamma$, each owner broadcasts an authenticated request containing a $\LockInto$ operation in order to lock their account into $\swid$ ({\small\circled{6}}). This results in locking certificates $L_1$ and $L_2$ to be shared between clients ({\small\circled{8}}). %(Recall that locking operations in Zef are not meant to be executed, therefore locking certificates cannot be used on their own and accounts are effectively locked into $\swid$.)
    
    \item Based on $L_1$ and $L_2$, one of the clients (or both as long as they agree on the desired outcome) acting as a \emph{round leader} interact(s) with the consensus instance $\swid$ and attempts to drive the completion of the one-shot binary agreement protocol $\swid$ in order to confirm or abort the swap ({\small\circled{9}}).
    
    \item Eventually, at least one of the clients receives enough \emph{commit} votes from authorities running the consensus instance $\swid$ that it may create a \emph{commit certificate} $C^*$ (defined in the next paragraph).
    When a commit certificate $C^*$ is broadcast to authorities ({\small\circled{10}}), each authority issues internal cross-shard requests to the shards of $\id_1$ and $\id_2$, with the following effects:
    \begin{enumerate}
    \item the sequence number $\nextsequence^{\id_i}(\alpha)$ is incremented and $\pending^{\id_i}(\alpha)$ is reset to $\bot$ (effectively unlocking the account $\id_i$),
    \item $\confirmed^{\id_i}(\alpha)$ is updated to include $C^*$, and finally
    \item if the \emph{decision value} is $\Confirm$ and $C^*$ is seen for the first time, the authentication key of the account $\pk^{\id_i}(\alpha)$ is set to the appropriate key $\pk_{3 - i}$ initially chosen by the other owner as part of $L_{3 - i}$ ({\small\circled{11}}).
    \end{enumerate}
\end{itemize}

\begin{figure}[t]
    \centering
    % \scalebox{0.6}{
    \begin{tikzpicture}
        \node[draw,circle, minimum size=2.25cm] (0,0) {\textbf{Broker}};
        \node[draw,circle, minimum size=2.25cm] at (4,-4.3) {\textbf{Client 1}};
        \node[draw,circle, minimum size=2.25cm] at (10,-4.3) {\textbf{Client 2}};
        
        % Committee
        \tikzmath{\sidelen=3;}
        \tikzmath{\s=\sidelen/10;\inlen=\sidelen/3;}
        \begin{scope}[shift={(5,1.5)}]
            \draw[dashed] (0,0) rectangle (\sidelen,-1*\sidelen);
            \node[above] at (\sidelen/2,0) {\textbf{\sysname Committee}};
            
            \draw (\s,{-1*\s} ) rectangle ({\s+\inlen},{-1*(\s+\inlen)} );
            \draw ({\sidelen - \s - \inlen},{-1*\s} ) rectangle ({\sidelen-\s},{-1*(\s+\inlen)} );
            \draw (\s,{-1*(\sidelen - \s - \inlen)}) rectangle ({\s+\inlen},{-1*(\sidelen-\s)});
            \draw ({\sidelen - \s - \inlen},{-1*(\sidelen - \s - \inlen)}) rectangle ({\sidelen-\s},{-1*(\sidelen-\s)});
        \end{scope}

        % Protocol Steps
        \draw[solid,-latex, line width=0.25mm] (0.9,0.9) -- (4.9,0.9) node[midway,above]{\circled{3} request $\swid$}; 
        \draw[solid,-latex, line width=0.25mm] (4.9,0.2) -- (1.2,0.2)  node[midway,above]{\circled{4} votes for $\swid$};
        \draw[solid,-latex, line width=0.25mm] (1.2,0) -- (4.9,0)  node[midway,below]{\circled{5} cert for $\swid$};
       
        \draw[solid,-latex, color=BlueViolet, <->, line width=0.25mm] (5.0,-3.6) -- (9.0,-3.6) node[midway,above]{\bluecircled{8}\,locking certs $L_1, L_2$}; 

        \draw[solid,-latex, line width=0.25mm] (5.2,-4.5) -- (8.8,-4.5) node[midway,above]{\circled{5} cert for $\swid$};

        \draw[solid,-latex, color=BlueViolet, <->, line width=0.25mm] (5.0,-5.2) -- (9.0,-5.2) node[midway,below]{\bluecircled{1} $\id_1, n_1, \id_2, n_2$};
        
        \draw[solid,-latex, line width=0.25mm] (0.2,-1.2) -- (0.2,-4.0) -- (2.8,-4.0) node[midway,above, shift={(0.1, 0)}]{\circled{5}\,cert for $\swid$}; 

        \draw[solid,-latex, line width=0.25mm] (2.8,-4.6) -- (-0.2,-4.6) node[midway,below]{\circled{2} $\id_1, n_1, \id_2, n_2$} -- (-0.2,-1.2) ;

        \draw[solid,-latex, line width=0.25mm] (3.2,-3.3) -- (3.2,-0.8) node[midway,left]{\parbox{1.6cm}{\circled{6}\,locking request}} -- (4.9,-0.8) ; 

        \draw[solid,-latex, line width=0.25mm] (4.9,-1) -- (3.4,-1) -- (3.4,-3.2) node[midway,right]{\parbox{2cm}{\circled{7}\,locking votes}} ; 

        \draw[solid,-latex, line width=0.25mm] (9.8,-3.1) -- (9.8,-0.8) node[midway,right]{\parbox{2cm}{\circled{6}\,locking request}} -- (8.1,-0.8) ; 

        \draw[solid,-latex, line width=0.25mm] (8.1,-1) -- (9.6,-1) -- (9.6,-3.2) node[midway,left]{\parbox{1.7cm}{\circled{7}\,locking votes}} ; 

        \draw[solid,-latex, color=BlueViolet, <->, line width=0.25mm] (11.2,-4.) -- (12,-4.) -- (12,-0.25) node[midway,right]{\parbox{2.0cm}{\bluecircled{9} lead one consensus round}} -- (8.1,-0.25) ; 

        \draw[solid,-latex, line width=0.25mm] (11.2,-4.4) -- (14.8,-4.4) -- (14.8,0.3) -- (8.1,0.3) node[midway,above]{\circled{10} commit cert $C^*$} ; 

        \draw[thick, -latex, shorten >=1pt] (8.1, 0.8) to [out=10,in=+50,loop,looseness=6] (8.1, 1.3) node[right, shift={(0.65,0.35)}] {\circled{11} Swap owners and unlock $\id_1, \id_2$} ;
    
    \end{tikzpicture}
    % }
    
    \caption{An atomic swap}
    \label{fig:atomic_swap}
\end{figure}
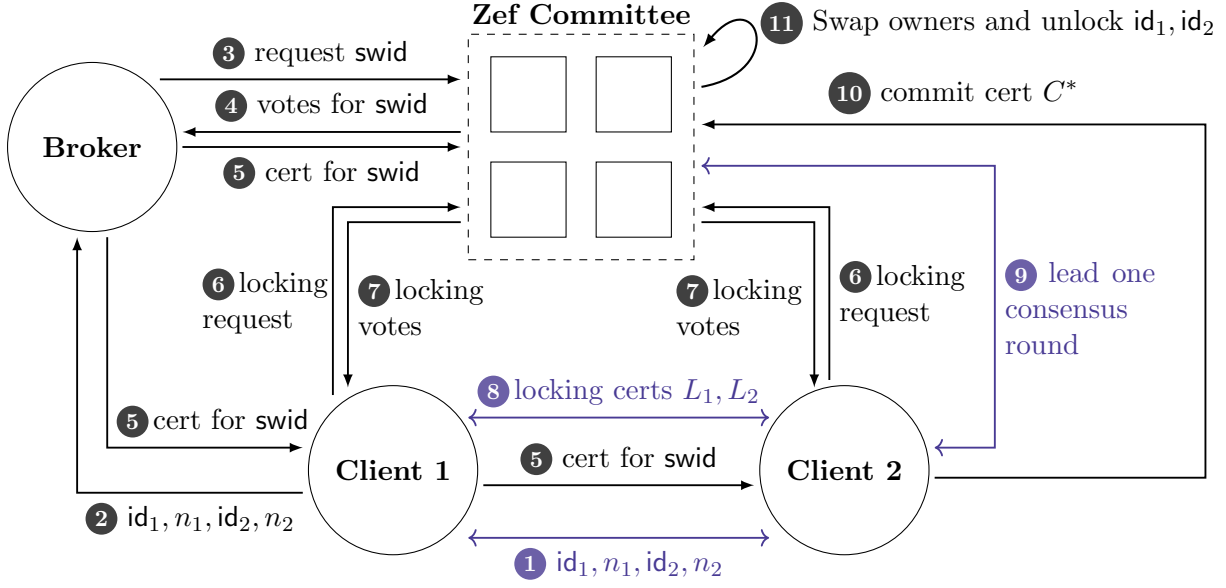

\paragraph{Data types and notations.} We introduce the following definitions:
\begin{itemize}
    \item A decision value $V$ is either $\Confirm$ or $\Abort$.
    \item A \emph{proposal} is a message $P = \Proposal(\swid, k, V)$ for some \emph{round number} $k \geq 0$ and decision value $V$. We write $\id(P) = \swid$, $\round(P) = k$ and $\decision(P) = V$.
    \item A \emph{pre-commit certificate} is a certificate on a proposal of the form $C = \cert[\PreCommit(P)]$.
    \item A \emph{commit certificate} is a certificate on a proposal of the form $C^* = \cert[\Commit(P)]$.
\end{itemize}
We extend the notations $\id(\cdot)$, $\decision(\cdot)$, and $\round(\cdot)$ to decision certificates and commit certificates.

By definition, \emph{agreement} holds iff two valid commit certificates for $\swid$ always contain the same decision value.

\paragraph{Atomic-swap states.} The state of an atomic-swap instance $\swid$ as seen by an authority~$\alpha$ can be described as follows:
\begin{itemize}
    \item The two accounts to swap $\id_i^\swid(\alpha)$ ($i \in \{1,2\}$);
    \item The two expected sequence numbers of lock certificates $n_i^\swid(\alpha)$;
    \item The two optional authentication keys $\pk_i^\swid(\alpha)$ (initially $\bot$ until the asset $i$ is \emph{locked});
    \item A last \emph{pending proposal} $\proposed^\swid(\alpha)$: initially $\bot$ then a proposal $P$.
    \item A last \emph{pending pre-commit}: $\locked^\swid(\alpha)$: initially $\bot$ then a pre-commit certificate~$C$.
\end{itemize}

\paragraph{Atomic-swap protocol.} An atomic-swap instance $\swid$ at $\alpha$ may receive the following requests $R$ from a client:
\begin{itemize}
    \item A proposal request $R = \HandleProposal(\auth_\pk[P], L_1, L_2)$ for some $P = \Proposal(\swid, k, V)$ and optional certificates $L_1$ and $L_2$.
    
    \item A pre-commit request $R = \HandlePreCommit(C)$ for some pre-commit certificate $C$.

    \item A commit request $R = \HandleCommit(C^*, L_1, L_2)$ where $C^*$ is a commit certificate and each $L_i$ is an optional lock certificate.
\end{itemize}

The handling of such requests to a consensus instance is presented in Algorithm~\ref{alg:consensus} and can be summarized as follows: (Invalid requests are ignored; additional \emph{safety} rules are provided below.)
\begin{itemize}
    \item Proposal request $R = \HandleProposal(\auth_\pk[P], L_1, L_2)$: The receiving authority $\alpha$ must verify that each $L_i$ is either $\bot$ or a \emph{valid lock certificate for the role} $i \in \{1, 2\}$ in the instance $\swid = \id(P)$, that is:
    \begin{itemize}
        \item $L = \cert[R]$ is a valid certificate for some $R = \Lock(\id, n, \LockInto(\swid, i, \pk))$;
        \item $\id = \id^\swid_i(\alpha)$ and $n = n^{\swid}_i(\alpha)$;
    \end{itemize}
    After setting $\pk^\swid_i(\alpha) = \pk$ for each such non-null $L_i$, the authority $\alpha$ verifies the following conditions:
    \begin{itemize}
        \item $P = \Proposal(\swid, k, V)$ is a proposal for $\swid$ and the authentication of $P$ by $\pk$ is correct;
        \item the proposal $P$ is \emph{valid} in the sense that $V = \Confirm$ implies that both input accounts are locked ($\forall i, \pk^\swid_i(\alpha) \neq \bot$);
        \item the round $k$ is \emph{available} (see discussion below);
        \item the proposal $P$ is \emph{safe} (see definition below)        
    \end{itemize}
    If the conditions are fulfilled then $\alpha$ sets $\proposed^\swid(\alpha)$ to $P$ and returns a signature on $\PreCommit(P)$.
    
    \item Pre-commit request $R = \HandlePreCommit(C)$: If $C = \cert[\PreCommit(P)]$ is a valid pre-commit certificate, $\id(C) = \swid$, and $C$ is safe (see below), then the authority sets $\locked^\swid(\alpha) = C$ and returns a signature on $\Commit(P)$.
    
    \item Commit request $R = \HandleCommit(C^*, L_1, L_2)$: If $C^* = \cert[\Commit(P)]$ is a valid commit certificate and $\id(C^*) = \swid$, then several cross-shard requests are prepared and sent to \emph{selected} accounts as follows:
    \begin{itemize}
    \item The account $\id$ is \emph{selected} iff either (i) it was locked previously in the instance ($\id = \id_i^\swid(\alpha)$ and $\pk^\swid_i(\alpha) \neq \bot$) or (ii) it holds that $\decision(P) = \Abort$ and a valid certificate $L_i = \cert[R_i]$ such that $R_i = \Lock(\id, n, \LockInto(\swid, i, \pk))$ is part of the request;

    \item Cross-shard requests are sent to the selected accounts $\id$ to unlock them by resetting $\pending^\id(\alpha)$ and incrementing $\nextsequence^\id(\alpha)$. If the decision value $\decision(P)$ is $\Confirm$ and the instance $\swid$ still exists, a new authentication key is also set for $\pk^\id(\alpha)$ thus fulfilling the desired swap of ownership.
    
    \item Finally, the instance $\swid$ is destroyed (if it was still present)
    \end{itemize}
\end{itemize}

Due to the validity condition above on $P$, the decision value is necessarily $\Abort$ if some account was never locked (i.e. $\pk^\swid_i(\alpha) = \bot$).
The lock certificates $L_i$ in the commit requests ensure that early termination of the consensus instance do not prevent users from unlocking their account in every authority afterwards.
To allow immediate deletion of an instance $\swid$, in the case of $\Abort$, we do not enforce consistency between the additional lock certificates $L_i$ and the original data in $\swid$. This bears no consequence since we only allow this behavior after verifying an $\Abort$ commit certificate for $\swid$.

\begin{algorithm}[t]
\caption{Consensus service}\label{alg:consensus}
\begin{algorithmic}[1]
\algdef{SE}[ASYNC]{Async}{EndAsync}{\textbf{do asynchronously}}{\algorithmicend}%
\algtext*{EndAsync}%
\algnewcommand\algorithmicswitch{\textbf{switch}}
\algnewcommand\algorithmiccase{\textbf{case}}
\algnewcommand\algorithmicassert{\texttt{assert}}
\algnewcommand\Assert[1]{\State \algorithmicassert(#1)}%
\algdef{SE}[SWITCH]{Switch}{EndSwitch}[1]{\algorithmicswitch\ #1\ \algorithmicdo}{\algorithmicend\ \algorithmicswitch}%
\algdef{SE}[CASE]{Case}{EndCase}[1]{\algorithmiccase\ #1:}{\algorithmicend\ \algorithmiccase}%
\algtext*{EndSwitch}%
\algtext*{EndCase}%

\small
\Function{InitInstance}{$\swid, \id_1, n_1, \id_2, n_2, C$}\Comment{Create a consensus instance for an atomic swap}
\For{$i \gets 1..2$}
    $(\id_i^\swid, n_i^\swid, \pk_i^\swid) \gets (\id_i, n_i, \bot)$
\EndFor
\State $(\proposed^\swid,\, \locked^\swid,\, \received^\swid) \gets (\bot,\, \bot,\, C)$
\EndFunction

\Statex

\Function{HandleProposal}{$\auth_\pk[P], L_1, L_2$} \Comment{Handle a proposal with optional lock certificates}
\State verify $\auth_\pk[P]$
\State match $\Proposal(\swid, k, V) = P$
\For{$i \gets 1..2$} \If{$L_i \neq \bot$}
    \State verify $\cert[R_i] = L_i$
    \State match $\Lock(\shorteq\id^\swid_i, \shorteq{n}^\swid_i, \LockInto(\shorteq\swid, \shorteq i, \pk_i)) = R_i$
    \State $\pk^\swid_i \gets \pk_i$ \Comment{Record the public key of role $i$}
\EndIf \EndFor
\State ensure $\pk \in \{ \pk^\swid_1, \pk^\swid_2 \}$ \Comment{Only users with a locked account can propose}
\State ensure $V = \Abort$ or $\forall i, \pk^\swid_i \neq \bot$ \Comment{Enforce validity of the swap}
\State ensure \Call{IsRoundAvailable}{\swid, k} \Comment{Available round values are restricted at a given time}
\State ensure \Call{IsSafeProposal}{P} \Comment{Enforce safety rule}
\State $\proposed^\swid \gets P$ \Comment{Record the proposal for future safety checks}
\State \textbf{return} $\Call{Vote}{\PreCommit(P)}$ \Comment{Success: return a signature meant to pre-commit $P$}
\EndFunction

\Statex

\Function{HandlePreCommit}{$C$} \Comment{Handle a pre-commit request}
\State verify $\cert[\PreCommit(P)] = C$
\State ensure \Call{IsSafePreCommit}{C} \Comment{Enforce safety rule}
\State $\locked^\swid \gets C$ \Comment{Record the pre-commit for future safety checks}
\State \textbf{return} $\Call{Vote}{\Commit(P)}$ \Comment{Success: return a signature meant to commit $P$}
\EndFunction

\Statex

\Function{HandleCommit}{$C^*, L_1, L_2$} \Comment{Handle a commit request}
\State verify $\cert[\Commit(P)] = C$
\State match $\Proposal(\swid, k, V) = P$
\For{$i \gets 1..2$}
    \State $(\id_i, n_i, \pk_i) \gets (\id^\swid_i, n^\swid_i, \pk^\swid_i)$ \Comment{Set locals with information from $\swid$ or $\bot$}
    \If{$V = \Abort$ and $L_i \neq \bot$} \Comment{Accept to unlock any account locked into $\swid$, once aborted}
        \State verify $\cert[R_i] = L_i$
        \State match $\Lock(\id, n, \LockInto(\shorteq\swid, \shorteq i, \pk)) = R_i$
        \State $(\id_i, n_i, \pk_i) \gets (\id, n, \pk)$
    \EndIf
\EndFor
\If{$V = \Confirm$ and $\swid \in \atomicswaps$}
    \For{$i \gets 1..2$}
        \color{BlueViolet}
        \Async \Comment{Cross-shard request to $\id_i$}
            \State ensure $\nextsequence^{\id_i} = n_i$
            \State $(\nextsequence^{\id_i}, \pending^{\id_i}) \gets (n_i + 1, \bot)$ \Comment{Unlock account $\id_i$}
            \State $\pk^{\id_i} \gets \pk_{3 - i}$ \Comment{Set the public key to the new value}
            \State $\confirmed^{\id_i} \gets \confirmed^{\id_i} :: C^*$
        \EndAsync
        \color{black}
    \EndFor
\Else
    \For{$i \gets 1..2$}
        \If{$\pk_i \neq \bot$}
            \color{BlueViolet}
            \Async \Comment{Cross-shard request to $\id_i$}
                \State ensure $\nextsequence^{\id_i} = n_i$
                \State $(\nextsequence^{\id_i}, \pending^{\id_i}) \gets (n_i + 1, \bot)$ \Comment{Unlock account $\id_i$}
                \State $\confirmed^{\id_i} \gets \confirmed^{\id_i} :: C^*$
%                \State $\nextsequence^{\id_i} \gets n_i + 1$
%                \State $\pending^{\id_i} \gets \bot$ \Comment{Unlock account $\id_i$}
            \EndAsync
            \color{black}
        \EndIf
    \EndFor
\EndIf
\State delete instance $\swid$ from $\atomicswaps$
\EndFunction
\end{algorithmic}
\end{algorithm}

\paragraph{Safety rules.} To guarantee agreement, an authority~$\alpha$ only accepts proposal and proposed certificates that are \emph{safe} at the time of the request (Algorithm~\ref{alg:safety_rules}):
\begin{enumerate}
    \item A proposal $P$ is safe for $\alpha$ iff the following conditions hold:
    \begin{itemize}
        \item (a) if $\bot \neq \proposed^\swid(\alpha) = P_0$ and $P \neq P_0$, then $\round(P) > \round(P_0)$;
        \item (b) if $\bot \neq \locked^\swid(\alpha) = C_0$, then $\round(P) > \round(C_0)$ and $\decision(P) = \decision(C_0)$.
    \end{itemize}
    \item A proposed certificate $C$ is safe for $\alpha$ iff the following conditions hold:
    \begin{itemize}
        \item (c) if $\bot \neq \proposed^\swid(\alpha) = P_0$, then $\round(C) \geq \round(P_0)$;
        \item (d) if $\bot \neq \locked^\swid(\alpha) = C_0$, then $\round(C) \geq \round(C_0)$.
    \end{itemize}
\end{enumerate}

Note that by definition of the protocol, $\proposed^\swid(\alpha)$ and $\locked^\swid(\alpha)$ never go back to~$\bot$ once there are set. Rather, these two fields respectively tracks the latest (safe) proposal~$P$ and the latest (safe) proposed certificate~$C$ that were voted on by $\alpha$.

\begin{algorithm}[t]
\caption{Safety rules}\label{alg:safety_rules}
\begin{algorithmic}[1]
\algdef{SE}[ASYNC]{Async}{EndAsync}{\textbf{do asynchronously}}{\algorithmicend}%
\algtext*{EndAsync}%
\algnewcommand\algorithmicswitch{\textbf{switch}}
\algnewcommand\algorithmiccase{\textbf{case}}
\algnewcommand\algorithmicassert{\texttt{assert}}
\algnewcommand\Assert[1]{\State \algorithmicassert(#1)}%
\algdef{SE}[SWITCH]{Switch}{EndSwitch}[1]{\algorithmicswitch\ #1\ \algorithmicdo}{\algorithmicend\ \algorithmicswitch}%
\algdef{SE}[CASE]{Case}{EndCase}[1]{\algorithmiccase\ #1:}{\algorithmicend\ \algorithmiccase}%
\algtext*{EndSwitch}%
\algtext*{EndCase}%

\small
\Function{IsSafeProposal}{$P$} \Comment{Determine if it is safe to vote for pre-committing $P$}
\State let $\Proposal(\swid, k, V) = P$
\If{$\proposed^\swid \neq \bot$ and $k \leq \round(\proposed^\swid)$}
\State return \textbf{false}
\EndIf
\If{$\locked^\swid \neq \bot$ and $\left(k \leq \round(\locked^\swid) \text{ or }V \neq \decision(\locked^\swid)\right)$}
\State return \textbf{false}
\EndIf
\State return \textbf{true}
\EndFunction

\Statex

\Function{IsSafePreCommit}{$C$} \Comment{Determine if it is safe to vote for committing $C$}
\State let $\cert[\PreCommit(P)] = C$
\State let $\Proposal(\swid, k, V) = P$
\If{$\proposed^\swid \neq \bot$ and $k < \round(\proposed^\swid)$}
\State return \textbf{false}
\EndIf
\If{$\locked^\swid \neq \bot$ and $k < \round(\locked^\swid)$}
\State return \textbf{false}
\EndIf
\State return \textbf{true}
\EndFunction

\end{algorithmic}
\end{algorithm}

\paragraph{Available rounds.} In practice, we may wish to prevent requests from using arbitrary round numbers $k \in \mathbb{N}$, because preventing exhaustion of such numbers would then require using unbounded-size infinite-precision integers. To address this issue while avoiding active coordination between authorities, we propose that authorities makes new round numbers available in sequential order at a fixed, given rate.

\paragraph{Client Protocol.} Assuming that an instance $\swid$ is still running and that no one else is proposing, a client with an asset locked in $\swid$ may drive completion as follows:
\begin{itemize}
    \item Query all the authorities in parallel to retrieve the highest round $k$ of a proposal $P=\proposed^\swid(\alpha)$ for some $\alpha$ and/or the highest pre-commit certificate $C = \locked^\swid(\alpha)$, if any.
    \item After a suitable delay $\delta$, if $C \neq \bot$, then broadcast $C$ to obtain a commit certificate. Otherwise, when $k+1$ is available, make a new proposal, then broadcast the pre-commit certificate.
    \item Broadcast the final commit certificate $C^*$.
\end{itemize}

%\mathieu{TODO: discuss the delay $\delta$ and (lack of) responsiveness. For a fully asynchronous version of the protocol, I would argue that one leader can simply ask the other one how to choose $k$ and $V$ and wait for the answer. I investigated three other variants: 3-QC or 2-QC+UC and binary asynchronous. All of them end up way more complicated and less intuitive (except maybe BA but it's dead slow).}

\paragraph{Proof of safety.} The agreement property on commit certificates is derived from the following lemma:
\begin{lemma} Assume $C^* = \cert[\Commit(P_1)]$ and $C_2 = \cert[\PreCommit(P_2)]$ such that $\round(P_2) \geq \round(P_1)$ then $\decision(P_1) = \decision(P_2)$.
\end{lemma}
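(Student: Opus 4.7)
The plan is to argue by induction on $k_2 = \round(P_2)$, with $k_1 = \round(P_1)$ fixed. The two main ingredients will be quorum intersection (to find an honest authority that participated in both protocols of interest) and the safety rules of Algorithm~\ref{alg:safety_rules}, in particular the monotonicity and decision-matching they enforce on $\locked^\swid(\alpha)$.

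First, a preliminary observation: since $C^* = \cert[\Commit(P_1)]$ aggregates signatures that, per \textsc{HandlePreCommit}, are only issued after verifying a pre-commit certificate on $P_1$, the mere existence of $C^*$ entails the existence of some $C_1 = \cert[\PreCommit(P_1)]$. Moreover, the quorum of signers of $C_1$ was collected through \textsc{HandleProposal} votes on $P_1$, so $P_1$ itself was ``\textsc{IsSafeProposal}-safe'' for at least a quorum of authorities. Next, I will record an auxiliary fact for the base case: two pre-commit certificates at the \emph{same} round must sign the same proposal. Indeed, by quorum intersection, an honest $\alpha$ signed both $\PreCommit(P_1)$ and $\PreCommit(P_2)$ via \textsc{HandleProposal}; rule~(a) applied to the second vote (after $\proposed^\swid(\alpha)$ was set by the first) forces $P_1 = P_2$ since the rounds are equal. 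This settles the base case $k_2 = k_1$.

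For the inductive step with $k_2 > k_1$, consider the quorum $Q^*$ of signers of $\Commit(P_1)$ (each of whom ran \textsc{HandlePreCommit} on some pre-commit certificate for $P_1$ and set $\locked^\swid$ to it) and the quorum $Q_2$ of signers of $\PreCommit(P_2)$ (each of whom ran \textsc{HandleProposal}). By quorum intersection pick an honest $\alpha \in Q^* \cap Q_2$. I first rule out that $\alpha$ ran \textsc{HandleProposal}$(P_2)$ before \textsc{HandlePreCommit}$(C_1^\alpha)$: were that the case, $\proposed^\swid(\alpha)$ would equal $P_2$ when $C_1^\alpha$ arrives, and rule~(c) would require $\round(C_1^\alpha) = k_1 \geq \round(P_2) = k_2$, contradicting $k_2 > k_1$. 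Hence $\alpha$ first set $\locked^\swid(\alpha)$ to some pre-commit certificate of round $\geq k_1$, and since rule~(d) makes $\locked^\swid(\alpha)$ non-decreasing in round, at the later time $\alpha$ ran \textsc{HandleProposal}$(P_2)$, its locked value $C' = \locked^\swid(\alpha)$ still satisfies $\round(C') \geq k_1$. Safety rule~(b) applied at that moment then yields both $\round(C') < k_2$ and $\decision(C') = \decision(P_2)$.

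At this point $C'$ is a pre-commit certificate on some proposal $P'$ with $k_1 \leq \round(P') < k_2$, so the induction hypothesis applied to the pair $(C^*, C')$ gives $\decision(P') = \decision(P_1)$, and combining with $\decision(P_2) = \decision(C') = \decision(P')$ closes the induction. The delicate point I expect to be the main obstacle is precisely this ordering argument for $\alpha$'s two votes: it is tempting to invoke rule~(b) directly, but one must first use rule~(c) (against a hypothetical early proposal vote) to guarantee that by the time $\alpha$ votes on $P_2$, its $\locked^\swid$ field has already been overwritten by a certificate of round at least $k_1$. Everything else then reduces to bookkeeping on the monotonicity of $\locked^\swid(\alpha)$ and an appeal to the induction hypothesis.
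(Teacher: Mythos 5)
Your proof is correct and follows essentially the same route as the paper's: induction on the round gap, quorum intersection between the commit quorum and the pre-commit quorum to find an honest $\alpha$, rule~(c) (plus monotonicity) to order $\alpha$'s two votes, then rules~(b) and~(d) to extract a locked pre-commit certificate $C'$ with $\round(P_1) \leq \round(C') < \round(P_2)$ and $\decision(C') = \decision(P_2)$, to which the induction hypothesis applies. Your write-up is in fact somewhat more explicit than the paper's about why the ordering argument must go through rule~(c) rather than rule~(b) directly.
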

\begin{proof}
By induction on $\round(P_2) \geq \round(P_1)$.

If $\round(P_2) = \round(P_1)$, since the certificates $\cert[\PreCommit(P_1)]$ and $\cert[\PreCommit(P_2)]$ exist, by quorum intersection, there exists an honest node $\alpha$ that voted for both $P_1$ and $P_2$. However, by safety rule~(a), honest nodes only vote for new proposals with strictly increasing rounds, therefore $P_1 = P_2$.

Otherwise, assume $\round(P_2) > \round(P_1)$. Let $C_1 = \cert[\PreCommit(P_1)]$. By quorum intersection of $C^*$ and $C_2$, there exist be a honest node~$\alpha$ that voted for both $\Commit(P_1)$ and $\PreCommit(P_2)$.

By rule~(d), the round of $\locked^\swid(\alpha)$ never decreases. Thus, by rule~(c), $\round(P_2) > \round(P_1) = \round(C_1)$ implies that $\alpha$ voted for $\Commit(P_1)$ first, then $\PreCommit(P_2)$.

At the time of voting for $\PreCommit(P_2)$, by~(b) and~(d), we have that $\locked^\swid(\alpha) = C$ for some pre-commit certificate $C$ such that $\round(P_2) > \round(C) \geq \round(P_1)$ and $\decision(P_2) = \decision(C)$.
By induction, we conclude $\decision(P_1) = \decision(C)$.
\end{proof}

\paragraph{Discussion on termination.}
As noted earlier, after the two assets are locked, in theory, one of the owners can prevent the protocol from terminating by indefinitely submitting proposals that conflict with the other client. In addition to forfeiting half of the assets, this requires active communication with at least one honest authority at every new round in the future, when a round becomes available. Specifically, the malicious leader must indefinitely guess or quickly observe whether the other client is proposing $\Confirm$ or $\Abort$, and propose the opposite decision value.

A classical approach to enforce strict termination in the partially-synchronous model consists in (1)~restricting proposals to be signed by a particular client based on the parity of $k$ and (2)~making new rounds available at an exponentially slow rate. In practice, this approach may be activated after a certain delay, when it is clear that the two owners are not collaborating.

\paragraph{Comparison with existing consensus protocols.}
Our proposal is based on the observation that traditional leader-based consensus protocols do not technically require leaders to be drawn from the entire set of validators or even to have non-zero voting rights---as long as enough leaders can be trusted to make progress. In our case, this means that we can use the owner(s) of locked account(s) as leaders of the consensus protocol instead of Zef validators. Once both accounts are locked, our proposal lets the two leaders coordinate directly outside of the protocol --- at least for some time, until a slow, leader selection is (optionally) activated to enforce termination.

Compared to fully-featured implementation of a consensus protocol such as LibraBFT~\cite{libraBFT}, our approach is a one-shot consensus protocol, in particular chains of blocks are not needed. We also do not attempt to provide responsiveness or tight latency guarantees when leaders are not cooperative. However, our proposal is arguably significantly simpler, only uses constant storage, and does not require active coordination between authorities (e.g. broadcasting timeout messages).

\paragraph{Future work.} We have presented an atomic swap functionality that changes the owner's keys of the two accounts simultaneously. Arguably, this constitutes the first step towards a more general framework where multiple users may lock their accounts (and corresponding assets) into a consensus instance in order to execute arbitrary queries/updates on the locked accounts in an atomic way.

\begin{algorithm}[t]
\caption{Account service (unchanged from Zef)}\label{alg:account_service}
\small
\begin{algorithmic}[1]
\Function{InitAccount}{\id, \pk}\Comment{Initialize a new account}
\State $\pk^\id \gets \pk$
\State $\nextsequence^\id \gets 0$
\State $\balance^\id \gets \balance^\id(\initstate)$ \Comment{Initial balance is 0 except for special accounts}
\State $\confirmed^\id \gets [\,]$
\State $\received^\id \gets [\,]$
\EndFunction
\Statex
\Function{HandleRequest}{$\auth_\pk[R]$}\Comment{Handle an authenticated request from a client}
\State let $\Execute(\id, n, O) \;|\; \Lock(\id, n, O) = R$ \Comment{Allow regular and locking operations}
\State ensure $\pk^\id \neq \bot$ \Comment{Make sure the account is active}
\State verify that $\auth_\pk[R]$ is valid for $\pk = \pk^\id$ \Comment{Check request authentication}
\If {$\pending^\id \neq R$}
    \State ensure $\pending^\id = \bot$ and $\nextsequence^\id = n$ \Comment{Verify sequencing}
    \State ensure \Call{ValidateOperation}{$\id$, $n$, $O$} = $R$ \Comment{Validate the operation}
    \State $\pending^\id \gets R$ \Comment{Lock the account on $R$}
\EndIf
\State \textbf{return} $\Call{Vote}{R}$ \Comment{Success: return a signature of the request}
\EndFunction
\Statex
\Function{HandleConfirmation}{$C$}\Comment{Handle a certified request}
\State verify $\cert[R] = C$
\State match $\Execute(\id, n, O) = R$ \Comment{Allow regular operations only}
\State ensure $\pk^\id \neq \bot$ \Comment{Make sure the account is active}
\If {$\nextsequence^\id = n$}
    \State run \Call{ExecuteOperation}{$\id$, $O$, $C$}
    \State $\nextsequence^\id \gets n + 1$ \Comment{Update sequence number}
    \State $\pending^\id \gets \bot$ \Comment{Make the account available again}
    \State $\confirmed^\id \gets \confirmed^\id :: C$ \Comment{Append certificate to the log}
\EndIf
\EndFunction
\end{algorithmic}
\end{algorithm}

\section{Auctions}
\label{sec:auctions}

We now describe an extension of Zef~\cite{zef-report} meant to support running decentralised auctions. We target support both 1st price and 2nd price auctions:
\begin{itemize}
    \item In the 1st price auctions, a bidder with the highest bid gets the item and pays a price equivalent to its bid.
    \item In the 2nd price auctions, a bidder with the highest bid gets the item but pays a price equivalent of the 2nd highest bid in the auction.
\end{itemize}
The 2nd price auctions provide a truthfulness property[?], where all the bidders are incentivised to provide their true valuation of items (the winner never overpays for an item). However, it comes at a price of requiring additional security mechanisms. A malicious seller may participate in the auction (potentially with Sybil identities) uniquely to become the 2nd highest bid in the auction and thus increase the selling price of the item. Such a behavior can be disincentivised, if every submitted but unrevealed bid is penalized~\cite{ferreira2020credible}. 

In a classical setup, auctions require multiple phases that need to be ordered: 
\begin{itemize}
    \item the seller creates an auction
    \item the bidders need to submit their sealed bids
    \item the seller stops the bid submission
    \item the bidders need to reveal their bids
    \item the seller announces the result of the auction
\end{itemize}
A correct solution must order those operations to prevent race-conditions between the bidders and the seller (was a bid submitted/revealed before the deadline?).

\subsection{Bid submission}

We define a Threshold Public Key Encryption (TPKE) system consists of five algorithms~\cite{shoup2002securing}:
\begin{itemize}
    \item $\Setup(n, k, \Lambda)$: Takes as input the number of decryption servers $n$, a threshold $k$ where $1 \leq k \leq n$, and a security parameter $\Lambda \in \mathcal{Z}$. It outputs a triple $(\PK, \VK, \SK)$ where $\PK$ is a public key, $\VK$ is a verification key, and $\SK= (\SK_1, ..., \SK_n)$ is a vector of $n$ private key shares. Decryption server $i$ is given the private key share $(i, \SK_i)$ and uses it to derive a decryption share for a given ciphertext. The verification key $\VK$ is used to check validity of responses from decryption servers.
    \item $\Encrypt(\PK, m)$: Takes as input a public key $\PK$ and a message $m$ . It outputs a ciphertext $c$.
    \item $\ShareDecrypt(\PK, i, \SK_i, c)$: Takes as input the public key $\PK$, a ciphertext
    $c$, and one of the n private key shares in $\SK$. It outputs a decryption share $\mu= (i, \hat{\mu})$ of the enciphered message, or a special symbol $(i, \bot)$.
    \item $\ShareVerify(\PK, \VK, c, \mu)$: Takes as input $\PK$, the verification key $\VK$, a ciphertext $c$, and a decryption share $\mu$. It outputs valid or invalid. When the output is valid we say that $\mu$ is a valid decryption share of $C$.
    \item $\Combine(\PK, \VK, C, {\mu_1, . . . , \mu_k})$: Takes as input $\PK$, $\VK$, a ciphertext $c$, and $k$ decryption shares ${\mu_1, . . . , \mu_k}$. It outputs a cleartext $M$ or $\bot$.
\end{itemize}

We assume that the FastPay authorities jointly execute $\Setup(n, f+1, \Lambda)$ as a part of the bootstrap process. Each authority $a_i$ is given its private key share $\SK_i$, the public key $\PK$ and the verification key $\VK$. We assume that no other authority $a_j, j \neq i$ has access to $\SK_i$. \michal{We'll need to find a scheme providing that.}

A user $u_i$ willing to participate in the auction chooses an amount it is willing to pay for the item $v_i$. It also chooses $v_\keyword{max}, v_\keyword{max} \leq v_i$ to back its bid. The user transfers $v_\keyword{max}$ to the auction object. The money acts as a deposit that will be returned if the user does not win the auction or used to pay for the object if the user wins the auction. 

The user then encrypts its bid by executing $c_i = \Encrypt(\PK, m_i)$, where $m_i$ contains $v_i$ and the auction identifier. The user generates $z_i$, a proof of correctness of encryption $c_i$\michal{Need to figure out the details here}. 

The user submit $c_i$ and $z_i$ to the authorities and obtains a submission certificate $C_i$.\michal{Provide details on how the authorities verify correctness of the submission} Note that receiving a certificate (confirmation from $2f+1$ authorities), means that at least $f+1$ honest authorities have received the encryption and can jointly recover message $M_i$.

The bidders send their bid submission certificates offline to the seller. The seller then submits those certificates to the system. Effectively, the seller can choose which bids will be allowed to participate in the auction. However, the seller is incentivized to maximize the number of bids in the auction as each additional bid can only increase the selling price of the object (and thus the revenue of the seller). 

Once the seller decides that they gathered enough bids, they submits $\keyword{end of bidding}$ message and obtains a certificate on the submission. The $\keyword{end of bidding}$ message contains all the bids already submitted by the seller. After accepting an $\keyword{end of bidding}$ message the authorities stop accepting new bids. \michal{We can actually make it a single step here. Seems much more efficient}

The seller individually contacts each authority presenting a certificate on the $\keyword{end of bidding}$ message. If the certificate is valid, the contacted authority invokes $\ShareDecrypt$ on all the bids present in the message and releases its decryption shares $\mu_i$. The seller collects all the decryption shares and locally invokes $\Combine$ to recover the values of the bids. The seller includes all the decrypted bids in an $\keyword{end of auction}$ message and submits the message to the authorities (trying to get a certificate). 

Once a certificate on $\keyword{end of auction}$ is submitted to an authority, the authority:
\begin{itemize}
    \item Calculates the highest and the 2nd highest bid \michal{We might want to mention that we can further simplify this operation by incporating something like~\cite{krol2020pastrami}}
    \item Assigns the object being sold to the highest bidder
    \item Deducts the winner's deposit by the value of the 2nd highest bid and transfers this value to the seller
    \item Returns the deposits to the bidders
    \item Deletes the item objects
\end{itemize}

\michal{With the threshold encryption we know that the seller is able to decrypt all the bids. However, the seller might still not do the work: decrypt and submit all the bids. We know it's always his fault but need a way to penalize him. The easiest solution seems to be with a deposit. In ~\cite{ferreira2020credible}, the provide a formula to calculate the required amount. I wanted to avoid deposits altogether, but can't find a better solution here. A better approach would be to just penalize the bidders if they don't reveal, but then, they'll have to write to the auction object making things much more complicated.}

\michal{The one thing left is to deal with the problem of seller not advancing in the auction and the bidders not being able to recover their deposit. We can solve that using the one shot consensus algorithm.}

\printbibliography

@misc{libraBFT,
  title={State machine replication in the Libra Blockchain},
  author={Baudet, Mathieu and Ching, Avery and Chursin, Andrey and Danezis, George and Garillot, Fran{\c{c}}ois and Li, Zekun and Malkhi, Dahlia and Naor, Oded and Perelman, Dmitri and Sonnino, Alberto},
  year={2019},
  publisher={Technical Report. Calibra. https://developers. libra. org/docs/state-machine~…}
}

@article{coconut,
  title={Coconut: Threshold issuance selective disclosure credentials with applications to distributed ledgers},
  author={Sonnino, Alberto and Al-Bassam, Mustafa and Bano, Shehar and Meiklejohn, Sarah and Danezis, George},
  journal={arXiv preprint arXiv:1802.07344},
  year={2018}
}

@inproceedings{fastpay,
  title={FastPay: High-Performance Byzantine Fault Tolerant Settlement},
  author    = {Mathieu Baudet and
               George Danezis and
               Alberto Sonnino},
  year = {2020},
  booktitle = {2nd ACM Conference on Advances in Financial Technologies},
  pages = {163–177},
  series = {AFT '20}
}

@article{flp,
author = {Fischer, Michael J. and Lynch, Nancy A. and Paterson, Michael S.},
title = {Impossibility of Distributed Consensus with One Faulty Process},
year = {1985},
volume = {32},
number = {2},
journal = {Journal of the ACM},
pages = {374–382},
}

@article{zef-report,
author = {Mathieu Baudet and Mahimna Kelkar and Alberto Sonnino and George Danezis},
title = {\sysname: Low-latency, Scalable, Private Payments (draft report)},
year = {2021},
}

@InProceedings{crdts,
author="Shapiro, Marc
and Pregui{\c{c}}a, Nuno
and Baquero, Carlos
and Zawirski, Marek",
editor="D{\'e}fago, Xavier
and Petit, Franck
and Villain, Vincent",
title="Conflict-Free Replicated Data Types",
booktitle="Stabilization, Safety, and Security of Distributed Systems",
year="2011",
pages="386--400",
}

@article{study-crdts,
author = { Marc Shapiro and Nuno Preguiça and Carlos Baquero and Marek Zawirski },
title = {A comprehensive study of Convergent and Commutative Replicated Data Types},
}

@inproceedings{ferreira2020credible,
  title={Credible, truthful, and two-round (optimal) auctions via cryptographic commitments},
  author={Ferreira, Matheus VX and Weinberg, S Matthew},
  booktitle={Proceedings of the 21st ACM Conference on Economics and Computation},
  pages={683--712},
  year={2020}
}

@article{shoup2002securing,
  title={Securing threshold cryptosystems against chosen ciphertext attack},
  author={Shoup, Victor and Gennaro, Rosario},
  journal={Journal of Cryptology},
  volume={15},
  number={2},
  pages={75--96},
  year={2002},
  publisher={Springer}
}

@inproceedings{krol2020pastrami,
  title={PASTRAMI: privacy-preserving, auditable, Scalable \& Trustworthy Auctions for multiple items},
  author={Kr{\'o}l, Michal and Sonnino, Alberto and Tasiopoulos, Argyrios and Psaras, Ioannis and Rivi{\`e}re, Etienne},
  booktitle={Proceedings of the 21st International Middleware Conference},
  pages={296--310},
  year={2020}
}

\appendix

\section{Generalized Account Operations}
\label{sec:accounts}

In FastPay and Zef, each account state contains a balance, noted $\balance^\id(\alpha) \in \mathbb{Z}$. The execution of a payment operation can be seen as applying a pair of updates $(-x, +x)$ to the sender and the recipient states, respectively. Namely, $-x < 0$ is the \emph{local} update removing funds and $+x > 0$ is the \emph{remote} update adding funds. An authority accepts to validate an (authenticated) payment operation $(-x, +x)$ created by the owner of $\id$ iff the resulting local state $\balance^\id(\alpha) - x \geq 0$ is \emph{valid}. We note that remote updates $+x$ are always \emph{safe} in the sense that they never make a valid state invalid.

This leads us to propose the following general axioms for account states and updates:

\paragraph{Generalized states and updates.}
Let $\mathcal{S}$ be a set of \emph{state values} and $\mathcal{U}$ be a set of \emph{updates}. We assume a validity predicate $\mathsf{is\_valid}$ on $\mathcal{S}$, a safety predicate $\mathsf{is\_safe}$ on $\mathcal{U}$ and an operator $(\cdot)$ from $\mathcal{S} \times \mathcal{U}$ to $\mathcal{S}$ such that the following holds:
\begin{enumerate}
    \item $\forall s \in \mathcal{S}$, $\forall u_1, u_2 \in \mathcal{U}$, $s \cdot u_1 \cdot u_2 = s \cdot u_2 \cdot u_1$;
    \item $\forall s \in \mathcal{S}$, $\forall u \in \mathcal{U}$, $\mathsf{is\_valid}(s) \text{ and } \mathsf{is\_safe}(u) \Longrightarrow \mathsf{is\_valid}(s \cdot u)$.
\end{enumerate}

The commutative updates (1) as well as the notion of eventually consistency described in the proof of Zef (\cite{zef-report}, Section 4) draws similarities to the notion of Commutative Replicated Data Types (CmRDTs) in the field of distributed databases~\cite{crdts, study-crdts}.

\paragraph{Generalized account operations.}
We may generalize the protocol for direct payments of FastPay and Zef as follows:
\begin{itemize}
    \item The balance is replaced by a field %$\mathsf{balance}^\id(\alpha)$ becomes
    $\mathsf{state}^\id(\alpha) \in \mathcal{S}$ such that the initial value of a new account is always valid.
    \item %the safe update $\mathsf{Transfer}(\id', \val)$ such that $0 \leq \val \leq \mathsf{balance}^\id(\alpha)$ is generalized into
    A new account operation $O = \mathsf{Apply}(\id', u_-, u_+)$ is  \emph{safe} to be issued by the owner of the account $\id$ as seen by an authority $\alpha$ iff $\mathsf{is\_safe}(u_+)$ and $\mathsf{is\_valid}(\mathsf{state}^\id(\alpha) \cdot u_-)$. (Note that in practice, additional constraints may apply on $u_-$ and $u_+$ for $O$ to be validated by $\alpha$.)
    \item The execution of an update $\mathsf{Apply}(\id', u_-, u_+)$ sent by account $\id$ consists in setting $\mathsf{state}^\id(\alpha) := \mathsf{state}^\id(\alpha) \cdot u_-$ and $\mathsf{state}^{\id'}(\alpha) := \mathsf{state}^{\id'}(\alpha) \cdot u_+$.
    %\item Authorities may apply \emph{external} safe updates $o^i(\alpha)$ regularly on top of every account state $\mathsf{state}^{\id}(\alpha)$ as long as their contribution eventually converges (i.e. $\forall s$, $s \cdot o^1(\alpha) \ldots o^n(\alpha) \ldots$ eventually remains constant and does not depend on $\alpha$).
\end{itemize}

The same argument as in Section~4 of~\cite{zef-report} shows that whenever two honest authorities have executed the same certified updates then the two authorities agree on the states of active accounts. Besides, after every certified update has been executed by an authority $\alpha$, $\mathsf{is\_valid}(\mathsf{state}^\id(\alpha))$ holds for every $\id$.

This formalism lets us address the following applications:
\begin{itemize}
    \item A single NFT may be represented on-chain using $\mathcal{S} = \mathbb{Z}$ and $\mathcal{U} = \{-1, +1\}$. (Here, $\mathcal{S} = \mathbb{Z}$ ensures proper definitions of the operator $(\cdot)$. In practice, state values would range in $\{-1,0,1\}$, the invalid value $-1$ being a temporary state.)
    \item To support severals NFTs and several currencies at the same time, we note that independent updates on $\mathcal{S}_1 \times \mathcal{U}_1$ and $\mathcal{S}_2 \times \mathcal{U}_2$ may composed by defining a product operator $(\cdot)$ on $\mathcal{S} \times \mathcal{U}$ with $\mathcal{S} = \mathcal{S}_1 \times \mathcal{S}_2$ and $\mathcal{U} = \mathcal{U}_1 \biguplus \mathcal{U}_2$.
    \item A (multi)-set of objects or coins with monotonic requirements (e.g. to own X, one must be own parent(X) and 3 coins). (This approach is seen in CRDTs for data-structures such as trees, directed graphs, etc.)
\end{itemize}

\paragraph{Further generalization.} 
In the case of multi-currency states, we note that the formalism does not force $u_-$ and $u_+$ to be in the same unit of currency. Each validator may accept conversion requests up to a certain \emph{most favorable rate} that may differ from other validators and may fluctuate over time.

This approach paves the way for automated marker makers (AMM) in the Zef system: every shard of each Zef authority may maintain the current conversion rate(s) in a local cache and accept rate updates (``push") from a dedicated blockchain tracking the past transactions (using real-time aggregated counters in each authority) and continuously re-computing the rate(s).

%Finally, from a technical standpoint, the requirement (1) related to commutative updates can be relaxed to take into account that local updates are always ordered in the same way by all validators.

\paragraph{Storage Cost}
A well-known issue of CmRDTs is the storage cost given that each update is individually synchronized and logged across replicas\footnote{\url{https://github.com/protocol/research-grants/blob/master/RFPs/rfp-005-optimized-CmRDT.md}}. In Zef, this issue is mitigated by the fact that accounts can be deleted after transferring their state to a new account, thereby effectively compressing the history of the account.

\end{document}